\newcommand{\beq}{\begin{equation}}
\newcommand{\eeq}{\end{equation}}
\newcommand{\Tr}{{\rm Tr}}
\newcommand{\bt}{{\bf t}}
\def\={\; = \;}
\def\+{\; + \;}
\def\:={\, := \, }
\begin{document}
	
\title{Wave functions and $k$-point functions for the AKNS hierarchy}
\author{Ang Fu}
\date{}
\maketitle
		
\theoremstyle{plain}
\numberwithin{equation}{section}
\newtheorem{theorem}{Theorem}[section]
\newtheorem{lemma}{Lemma}[section]
\newtheorem{Definition}{Definition}[section]
\newtheorem{remark}{Remark}[section]	
\newtheorem{pro}{Proposition}[section]
\newtheorem{cor}{Corollary}[section]
		
\begin{abstract}
For an arbitrary solution to the AKNS hierarchy, the logarithmic derivatives of the tau-function of the solution can be computed by the matrix-resolvent method~\cite{D20, FY22}. In this paper, we introduce a pair of wave functions of the solution and we use them to express the corresponding matrix resolvent. Based on this, we derive a new formula for the $k$-point correlation function of the AKNS hierarchy expressed in terms of wave functions. As an application, we show that the tau-function of an arbitrary solution to the AKNS hierarchy is a KP tau-function.

\end{abstract}

\noindent{\textbf{Keywords.}} {AKNS hierarchy; Matrix resolvent; Tau-function; Wave functions}
	
%   \setcounter{tocdepth}{1}
%\tableofcontents		
		
\section{Introduction}\label{sec1}
The matrix-resolvent (MR) method, introduced and developed in~\cite{BDY16, BDY21, CY22, D20, DS85, DVY22, DY17, DYZ21, FY22, FYZ23, Y20, YZhou}, provides an efficient way for computing logarithmic derivatives of tau-functions for integrable systems. The MR method is also effective in exploring relationships between different integrable systems (see, e.g.,~\cite{FY22, FYZ23, YZhou}).

In~\cite{DYZ21}, for an arbitrary solution to the KdV hierarchy, an explicit formula for the generating series of $k$th order $(k \ge 2)$ logarithmic derivatives  of the corresponding tau-function was obtained, which is explicitly a pair of wave functions  of the solution; 
as applications, computations on the Witten–Kontsevich tau-function, the generalized
Brézin–Gross–Witten (BGW) tau-function and a modular deformation of the
generalized BGW tau-function were given there. 
This development above was subsequently extended~\cite{Y20} to the Toda lattice hierarchy; as applications, computations on the 
Gaussian Unitary Ensemble (GUE) correlators and Gromov–Witten invariants of the Riemann sphere were  given there. 
The similar generalization was also given~\cite{FLY24} for the Volterra lattice hierarchy.

In~\cite{FY22}, the MR method was extended to the AKNS hierarchy (also known as the nonlinear Schr\"{o}dinger (NLS) hierarchy or the $1$-constrained KP hierarchy; cf.~\cite{AC91, AKNS74, C92, CL91, D03, L99, T04, ZS72}). Based on this method, we gave~\cite{FY22} a detailed proof of a theorem of Carlet, Dubrovin and Zhang~\cite{CDZ04} regarding the relationship between the (extneded) Toda lattice hierarchy and the (entended) AKNS hierarchy. 
From the well-known Lax representation of the AKNS~\cite{AC91, AKNS74, FY22, ZS72}, the MR method~\cite{FY22}, and the similarity to Toda lattice hierarchy, it is natural to expect that the AKNS hierarchy admits a pair of wave functions
$\psi_A(\bt;\xi;\epsilon)$ and $\psi_B(\bt;\xi;\epsilon)$, and  these functions should yield novel explicit formulae for the generating series of $k$th order $(k \ge 2)$ logarithmic derivatives  of the tau-function for the AKNS hierarchy. 

Let us now give the definition of the formal wave functions for the AKNS hierarchy. Let $V$ be the ring of functions of $X$ closed under $\partial_X$, and let $\widetilde{V}$ be a ring with $V\subseteq \partial_X(\widetilde{V})\subseteq\widetilde{V}$.  Let $(q(\bt;\epsilon),r(\bt;\epsilon)),\bt=(t_0=X,t_1,t_2,\dots)$ be the unique solution in $V((\epsilon))[[\bt_{>0}]]^2$ to the AKNS hierarchy, satisfying the initial condition $q(\bt;\epsilon)\vert_{\bt_{>0}=0}=q(X;\epsilon),r(\bt;\epsilon)\vert_{\bt_{>0}=0}=r(X;\epsilon)$ with $q(X;\epsilon),r(X;\epsilon)\in V((\epsilon))$.
Let $R({\bf t};\xi;\epsilon)$ be the matrix resolvent~\cite{FY22} for the AKNS hierarchy (see also in Section~\ref{sec2}).
\begin{Definition}\label{t-wavefun-def}
	An element
	\begin{align}
		\psi_A(\bt;\xi;\epsilon)=\phi_{A}(\bt,\xi;\epsilon)e^{\epsilon^{-1}\sum_{k\ge 0}2^{k}t_{k}\xi^{k+1}},\label{waveta}
	\end{align}
	where $\phi_{A}(\bt,\xi;\epsilon)=\sum_{k=0}^{+\infty}\phi_{A,k}(\bt;\epsilon)\xi^{-k}\in \widetilde{V}((\epsilon))[[\xi^{-1}]]$ and $\phi_{A,0}=1$, 
	is called the formal wave functions of type A associated with $(q(\bt;\epsilon),r(\bt;\epsilon))$ if
	\begin{align}
		\frac{\partial \Psi_{A}(\bt;\xi;\epsilon)}{\partial t_{k}}=\epsilon^{-1}2^{k}\left(\left(\xi^{k+1}R({\bf t};\xi;\epsilon)\right)_{+}-\xi^{k+1}I\right) \Psi_{A}(\bt;\xi;\epsilon),\quad k\ge 0,\label{spectraleqtimeA}
	\end{align}
	where $\Psi_{A}(\bt;\xi;\epsilon)=\begin{pmatrix}
		\psi_A(\bt;\xi;\epsilon),
		\frac{\epsilon\psi_{A,X}(\bt;\xi;\epsilon)-\xi\psi_A(\bt;\xi;\epsilon)}{q(\bt;\epsilon)}
	\end{pmatrix}^{T}$ and $\psi_{A,X}$ denotes $\frac{\partial \psi_{A}}{\partial X}$.
	An element
	\begin{align}
		&\psi_B(\bt;\xi;\epsilon)=\phi_{B}(\bt,\xi;\epsilon)q(\bt;\epsilon)e^{\epsilon^{-1}\sum_{k\ge 0}-2^{k}t_{k}\xi^{k+1}},\label{wavetb}
	\end{align}
	where $\phi_{B}(\bt,\xi;\epsilon)=\sum_{k=0}^{+\infty}\phi_{B,k}(\bt;\epsilon)\xi^{-k}\in \widetilde{V}((\epsilon))[[\xi^{-1}]]$ and $\phi_{B,0}=1$,
	is called the formal wave functions of type B associated with $(q(\bt;\epsilon),r(\bt;\epsilon))$ if
	\begin{align}
		\frac{\partial \Psi_{B}(\bt;\xi;\epsilon)}{\partial t_{k}}=\epsilon^{-1}2^{k}\left(\left(\xi^{k+1}R({\bf t};\xi;\epsilon)\right)_{+}-\xi^{k+1}I\right) \Psi_{B}(\bt;\xi;\epsilon),\quad k\ge 0,\label{spectraleqtimeB}
	\end{align}
	where $\Psi_{B}(\bt;\xi;\epsilon)=\begin{pmatrix}
		\psi_B(\bt;\xi;\epsilon),
		\frac{\epsilon\psi_{B,X}(\bt;\xi;\epsilon)-\xi\psi_B(\bt;\xi;\epsilon)}{q(\bt;\epsilon)}
	\end{pmatrix}^{T}$ and $\psi_{B,X}$ denotes $\frac{\partial \psi_{B}}{\partial X}$.
\end{Definition}
The existence of the two wave functions $\psi_A$ and $\psi_B$ will be established in Section~\ref{sec3}.  

We define
\begin{align}
	d(\bt;\xi;\epsilon)=\epsilon\frac{\psi_A(\bt;\xi;\epsilon)\psi_{B,X}(\bt;\xi;\epsilon)-\psi_{A,X}(\bt;\xi;\epsilon)\psi_B(\bt;\xi;\epsilon)}{q(\bt;\epsilon)}.\label{definedtime}
\end{align}
\begin{Definition}
	We say $\psi_A,\psi_B$ form a pair if 
	\begin{align}\label{paircond}
		d(\bt;\xi;\epsilon)=-2\xi.
	\end{align}
\end{Definition}
We will prove that there exists a pair of wave functions $\psi_A, \psi_B$ associated with $(q(\mathbf{t};\epsilon), r(\mathbf{t};\epsilon))$ for the AKNS hierarchy in Section~\ref{sec3}.  
Similar to that in~\cite{D03, DYZ21, FLY24, Y20},  the relationship between the pair of wave functions $\psi_A,\psi_B$
​and the matrix resolvent for the AKNS hierarchy will be established in Section~\ref{sec4} and we introduce 
\begin{align}\label{defineD}
	D(\bt;\xi,\nu;\epsilon):=&\frac{\epsilon\left(\psi_A(\bt;\xi;\epsilon)\psi_{B,X}(\bt;\nu;\epsilon)-\psi_{A,X}(\bt;\xi;\epsilon)\psi_B(\bt;\nu;\epsilon)\right)}{(\xi-\nu)q(\bt;\epsilon)}+\frac{\psi_A(\bt;\xi;\epsilon)\psi_B(\bt;\nu;\epsilon)}{q(\bt;\epsilon)}.
\end{align}
Then we have the following main theorem.

\begin{theorem}\label{main1time} 
	Let $k\geq 2$ be an integer. 
	The generating series of $k$-point correlation functions of the solution $(q(\bt;\epsilon),r(\bt;\epsilon))$ is given by
	\begin{align}
		&\sum_{i_1,\dots,i_k\ge 0}\Omega_{i_{1},i_{2},\dots,i_{k}}({\bf t};\epsilon)\prod_{j=1}^{k}\frac{1}{2^{i_j}\xi_{j}^{i_j+2}}\nonumber\\
		=&  
		-\frac{1}{\prod_{j=1}^k \xi_{j}}  
		\sum_{\sigma \in S_k/C_k} \prod_{j=1}^k D(\bt;\xi_{\sigma(j+1)},\xi_{\sigma(j)};\epsilon) 
		\,-\, \frac{4\delta_{k,2}}{(\xi_1-\xi_2)^2},  \label{mainidentity}
	\end{align}
	where $\Omega_{i_{1},i_{2},\dots,i_{k}}({\bf t};\epsilon)$ are the $k$-point correlation functions (defined in Section~\ref{sec2}) of the solution $(q(\bt;\epsilon),r(\bt;\epsilon))$ for the AKNS hierarchy.

\end{theorem}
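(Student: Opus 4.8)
The plan is to follow the now-standard matrix-resolvent strategy used in \cite{DYZ21, Y20, FLY24}, adapted to the AKNS setting. The starting point is the generating-series formula for $k$-point correlation functions in terms of the matrix resolvent $R(\bt;\xi;\epsilon)$ established in \cite{FY22} (recalled in Section~\ref{sec2}), which expresses the left-hand side of \eqref{mainidentity} as a cyclic trace: roughly,
\begin{align}
\sum_{i_1,\dots,i_k\ge 0}\Omega_{i_1,\dots,i_k}(\bt;\epsilon)\prod_{j=1}^k\frac{1}{2^{i_j}\xi_j^{i_j+2}}
= -\frac{1}{\prod_j\xi_j}\sum_{\sigma\in S_k/C_k}\tr\prod_{j=1}^k R(\bt;\xi_{\sigma(j)};\epsilon) \;-\; \frac{4\delta_{k,2}}{(\xi_1-\xi_2)^2}
\nonumber
\end{align}
up to the exact normalization conventions of Section~\ref{sec2}. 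Thus the theorem reduces to the purely algebraic identity
\begin{align}
\tr\!\Big(\prod_{j=1}^k R(\bt;\xi_{\sigma(j)};\epsilon)\Big) \;=\; \prod_{j=1}^k D(\bt;\xi_{\sigma(j+1)},\xi_{\sigma(j)};\epsilon),
\label{planRtoD}
\end{align}
where the product on the left is taken in cyclic order.

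The key step is to show that the matrix resolvent factorizes through the wave functions. From Section~\ref{sec4} (the relationship between $\psi_A,\psi_B$ and $R$ to be established there), I expect an expression of the shape $R(\bt;\xi;\epsilon) = \Phi(\bt;\xi;\epsilon)\,E\,\Phi(\bt;\xi;\epsilon)^{-1}$ for a fixed rank-one idempotent-type matrix $E$, where the columns of $\Phi$ are built from $\Psi_A$ and $\Psi_B$; the pairing condition \eqref{paircond} is precisely what makes $d(\bt;\xi;\epsilon)=-2\xi$ the correct normalization of the determinant of $\Phi$ so that $\Phi^{-1}$ has the clean form needed. Substituting this into $\tr\prod_j R(\bt;\xi_{\sigma(j)};\epsilon)$, the interior factors telescope: each product $\Phi(\xi_{\sigma(j)})^{-1}\Phi(\xi_{\sigma(j+1)})$ collapses, and because $E$ is rank one the whole trace becomes a product of scalar entries. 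Matching those scalar entries against the definition \eqref{defineD} of $D(\bt;\xi,\nu;\epsilon)$ — noting that $D$ is exactly the off-diagonal/Cauchy-type combination $\frac{\epsilon(\psi_A(\xi)\psi_{B,X}(\nu)-\psi_{A,X}(\xi)\psi_B(\nu))}{(\xi-\nu)q}+\frac{\psi_A(\xi)\psi_B(\nu)}{q}$, which is precisely the $(\xi,\nu)$-bilinear "propagator" attached to the rank-one structure — yields \eqref{planRtoD}. The $\delta_{k,2}$ correction arises, as in the KdV and Toda cases, from the regularization of the double pole at $\xi_1=\xi_2$ when $k=2$: the diagonal term of the cyclic sum contributes a $\frac{1}{(\xi_1-\xi_2)^2}$-singularity in $D(\xi_1,\xi_2)D(\xi_2,\xi_1)$ that must be subtracted to match the formal power series on the left.

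I expect the main obstacle to be two-fold. First, pinning down the precise form of the factorization $R=\Phi E\Phi^{-1}$ and verifying that the two defining linear systems \eqref{spectraleqtimeA}, \eqref{spectraleqtimeB} together with the $X$-equation (the $k=0$ flow) force $\Phi$ to conjugate $R$ to the constant matrix $E$ — this is where the compatibility of the AKNS Lax structure with the wave-function ansatz is really used, and the $\epsilon$-bookkeeping in $\widetilde V((\epsilon))[[\xi^{-1}]]$ must be handled carefully so that all inverses and the expansion in $\xi^{-1}$ make sense. Second, the careful treatment of the $\xi=\nu$ singularities: one must check that $D(\bt;\xi,\nu;\epsilon)$ has the right pole structure (a simple pole along $\xi=\nu$ from the first term, cancelled against a zero of the numerator), that the $k$-point generating series on the left is genuinely a formal series in the $\xi_j^{-1}$ with no such poles, and hence that \eqref{planRtoD} holds as an identity of rational functions whose expansion then reproduces the $\Omega$'s. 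The combinatorics of the sum over $S_k/C_k$ and the telescoping is routine once the rank-one factorization is in hand, so I would present that part briefly and concentrate the detailed argument on the factorization and the pole analysis.
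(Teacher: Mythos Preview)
Your plan is correct and matches the paper's approach exactly: Proposition~\ref{propRP} gives the rank-one factorization $R=\Psi\begin{psmallmatrix}2&0\\0&0\end{psmallmatrix}\Psi^{-1}$, which the paper rewrites as $R=\frac{2}{d}\,r_1^{T}r_2$ with $D(\xi,\nu)=\frac{r_2(\nu)\,r_1(\xi)^{T}}{\xi-\nu}$, after which the cyclic-trace formula~\eqref{cor2,21} telescopes just as you describe. Two small corrections: the reduced identity is not the bare $\tr\prod_j R(\xi_{\sigma(j)})=\prod_j D(\xi_{\sigma(j+1)},\xi_{\sigma(j)})$ you wrote but rather $\dfrac{\tr\prod_j R(\xi_{\sigma(j)})}{\prod_j(\xi_{\sigma(j)}-\xi_{\sigma(j+1)})}=\dfrac{1}{\prod_j\xi_j}\prod_j D(\xi_{\sigma(j+1)},\xi_{\sigma(j)})$ (the pair condition $d=-2\xi$ is exactly what converts the $2^k/\prod d(\xi_j)$ into $1/\prod\xi_j$); and the $\delta_{k,2}$ term needs no separate pole analysis here---it is already present in~\eqref{cor2,21} and simply carries through unchanged.
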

We note that for algebraic geometric solution an essentially equivalent form of the above Theorem~\ref{main1time} was also given in~\cite{D20}.

In~\cite{Z15}, Zhou derived  an explicit formula  for the $n$-point function for an arbitrary KP tau-function in the big cell. We now give a brief review of this formula. Let
$Z_V({\bf t})$ be an arbitrary KP tau-function in the big cell, where $V$ corresponds to an element in the big cell $Gr_0$ of Sato's Grassmannian, whose affine coordinates are $A_{i,j}$.
The associated $n$-point function for $\log Z_V$ is defined as
\begin{align}
G_n(\xi_1, \dots, \xi_n) = \sum_{k_1,\dots,k_n\ge 1} \prod_{i=1}^n \frac{1}{\xi_i^{k_i+1}} \cdot
\frac{\partial^n \log Z_V({\bf t})}{\partial t_{k_1-1} \dots \partial t_{k_n-1}}\biggr|_{{\bf t} = {\bf 0}}, \quad n\ge 1.
\end{align}
Zhou proved~\cite{Z15} that for $n=1$, $G_1(\xi) = \sum_{i,j\geq 0} A_{i,j} \xi^{-i-j-2}$,
and for $n \ge2 $,
\begin{align} \label{eqn:G-n}
	G_n(\xi_1, \ldots, \xi_n)
	= (-1)^{n-1} \sum_{\sigma \in S_{n}/C_{n}}
	\prod_{i=1}^n B(\xi_{\sigma(i)}, \xi_{\sigma(i+1)})
	-  \frac{\delta_{n,2}}{(\xi_1-\xi_2)^2},
\end{align}
where 
\begin{align}
	B(\xi_i, \xi_j) = \begin{cases}
		\frac{1}{\xi_i-\xi_j} + A(\xi_i, \xi_j),  & i \not= j, \\
		A(\xi_i, \xi_i),  & i =j, 
	\end{cases}
\end{align}
and $A(\xi, \eta)
=  \sum_{i,j \geq 0} A_{i,j}   \xi^{-j-1} \eta^{-i-1} $. 
It was shown in~\cite{YZhou} that the above formula directly implies that 
for any formal power series $F$ in $\mathbf{t}$ with associated $n$-point functions $G_n(\xi_1,\dots,\xi_n)$, 
if there exists numbers $(A_{i,j})_{i,j\geq 0}$ such that for all $n \geq 2$, $G_n$ is given by~\eqref{eqn:G-n}, then $Z = e^F$ is a KP tau-function.
In~\cite{ABDKS25}, both~\eqref{eqn:G-n} and the inverse statement are called Zhou's theorem.
Using the MR method for the Toda lattice hierarchy, Yang and Zhou~\cite{YZhou} proved
%Following this result and using the matrix resolvent (MR) method for the extended Toda lattice hierarchy~\cite{YZhou}, Yang and Zhou~\cite{YZhou} further proved 
that the tau-function of an arbitrary solution to the Toda lattice hierarchy is a KP tau-function.  Similiar to~\cite{YZhou}, we will prove the following theorem.

\begin{theorem}\label{kptau}
	The tau-function $\tau({\bf t};\epsilon)$ of an arbitrary solution to the AKNS hierarchy in the sense of~\cite{D20,FY22} is a tau-function of the KP hierarchy.
\end{theorem}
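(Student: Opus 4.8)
\section*{Proof proposal for Theorem~\ref{kptau}}

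The plan is to verify, for $F(\bt;\epsilon):=\log\tau(\bt;\epsilon)$, the hypotheses of the converse part of Zhou's theorem (in the form established in~\cite{YZhou}, cf.~\cite{Z15,ABDKS25}), and then invoke that theorem. The first step is a change of variables from the AKNS times $(t_0,t_1,t_2,\dots)$ to KP times, namely $s_j:=\epsilon^{-1}2^{j-1}t_{j-1}$ for $j\ge 1$, so that $\partial_{t_k}=\epsilon^{-1}2^{k}\partial_{s_{k+1}}$ and $t_k\leftrightarrow s_{k+1}$ is a bijection of the two time sets; this is precisely the substitution dictated by the exponential factors $e^{\pm\epsilon^{-1}\sum_{k\ge 0}2^{k}t_{k}\xi^{k+1}}$ in~\eqref{waveta}--\eqref{wavetb}. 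Recalling from Section~\ref{sec2} that, for $k\ge 2$, the correlators $\Omega_{i_1,\dots,i_k}$ are (up to normalization) the $k$-th logarithmic $\bt$-derivatives of $\tau$, the factors $2^{i_j}$ on the left-hand side of~\eqref{mainidentity} are exactly absorbed by this substitution; combined with an overall rescaling of the spectral parameters by a constant of the form $\pm2\epsilon$, the left-hand side of~\eqref{mainidentity}, evaluated at $\bt=\bdzero$ (and viewing $\log\tau$ as a formal series in the $s_j$, i.e. expanding also in $t_0=X$ around $0$), becomes the generating series $G_k(\xi_1,\dots,\xi_k)$ of the $k$-point functions of $F$ in Zhou's normalization.

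The second, and substantive, step is to recognize the right-hand side of~\eqref{mainidentity} as the right-hand side of Zhou's formula~\eqref{eqn:G-n}. Since the order-reversing bijection of the $C_k$-cosets normalizes $C_k$, the cyclic sum $\sum_{\sigma\in S_k/C_k}\prod_j D(\bt;\xi_{\sigma(j+1)},\xi_{\sigma(j)};\epsilon)$ may be rewritten with the two arguments of $D$ in the same cyclic order as the $B$'s in~\eqref{eqn:G-n}; then, distributing the prefactor $-1/\prod_j\xi_j$ as one factor per $D$, absorbing signs into the $(-1)^{k-1}$ of~\eqref{eqn:G-n}, and performing the spectral rescaling, one is led to a candidate kernel $B(\xi,\eta)$ proportional to $D(\bdzero;\xi,\eta;\epsilon)$ divided by a spectral variable, and to candidate affine coordinates $(A_{i,j})_{i,j\ge 0}$ read off from its expansion at $\xi,\eta=\infty$. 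What must then be checked is that this $B$ has exactly the structure Zhou's theorem requires: its singular part is $\tfrac1{\xi-\eta}$, and the remainder equals $\sum_{i,j\ge 0}A_{i,j}\xi^{-j-1}\eta^{-i-1}$ --- in particular a genuine double series in $\xi^{-1},\eta^{-1}$ with constant coefficients and no term depending on a single variable. The simple-pole normalization is controlled by the pairing condition~\eqref{paircond}: by~\eqref{definedtime} the numerator of the first term of~\eqref{defineD} equals $q\,d(\bt;\xi;\epsilon)=-2\xi q$ on the diagonal $\nu=\xi$, which pins the residue and, for $k=2$, makes the constant $4$ in front of $\delta_{k,2}$ in~\eqref{mainidentity} exactly cancel the resulting double pole of $B(\xi_1,\xi_2)B(\xi_2,\xi_1)$, reproducing the $-\delta_{k,2}/(\xi_1-\xi_2)^2$ of~\eqref{eqn:G-n}. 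The remaining structural claims use the explicit expression of $\psi_A,\psi_B$, hence of $D$, in terms of the matrix resolvent $R(\bt;\xi;\epsilon)$ obtained in Section~\ref{sec4}, together with $\phi_{A,0}=\phi_{B,0}=1$, to guarantee that $(A_{i,j})$ really are numbers and that no spurious pure-$\xi^{-m}$ or pure-$\eta^{-m}$ contributions survive.

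With $G_k$ thus brought into the form~\eqref{eqn:G-n} for every $k\ge 2$ with one fixed sequence $(A_{i,j})_{i,j\ge 0}$, the converse of Zhou's theorem yields that $e^{F}=\tau(\bt;\epsilon)$, regarded as a formal series in the $s_j$ (equivalently in the rescaled AKNS times), is a tau-function of the KP hierarchy, which is the assertion. I expect the main obstacle to be the verification in the second step: simultaneously fixing the single common rescaling of the spectral parameters that normalizes the residue of $B$ to $1$, turns the $4$ in front of $\delta_{k,2}$ into $1$, and reconciles the overall sign in~\eqref{mainidentity} with the $(-1)^{k-1}$ of~\eqref{eqn:G-n}; and, hand in hand with it, confirming via the Section~\ref{sec4} identities and~\eqref{paircond} that after the rescaling the non-singular part of $B$ is a bona fide series $\sum_{i,j\ge 0}A_{i,j}\xi^{-j-1}\eta^{-i-1}$. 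This is the analogue of the corresponding step in~\cite{YZhou}, and it is the place where the pair structure of $\psi_A,\psi_B$ and their relation to the matrix resolvent are used in an essential way.
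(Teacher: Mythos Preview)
Your proposal is correct and follows the same route as the paper: bring~\eqref{mainidentity} into the form~\eqref{eqn:G-n} and invoke the converse of Zhou's theorem from~\cite{YZhou}. The paper packages the two substantive checks you identify as separate results --- Proposition~\ref{Beq} (rewriting the cyclic sum over $D$ as one over $B(\bt;\xi,\nu;\epsilon):=\nu^{-1}D(\bt;\xi,\nu;\epsilon)\,e^{\epsilon^{-1}\sum_{k\ge 0} 2^k t_k(\nu^{k+1}-\xi^{k+1})}$, which absorbs the prefactor $1/\prod_j\xi_j$ and strips the exponentials rather than rescaling the spectral parameters as you suggest) and Lemma~\ref{lemmaB} (showing via the pair condition~\eqref{paircondn} that $B=\tfrac{-2}{\xi-\nu}+\sum_{i,j\ge 0}A_{i,j}(\bt;\epsilon)\,\xi^{-i-1}\nu^{-j-1}$, i.e.\ exactly the pole-plus-double-series structure you flagged as the crux) --- after which a trivial rescaling $B\mapsto -B/2$ turns the residue $-2$ into $1$ and the coefficient $4$ in front of $\delta_{k,2}$ into $1$, so your concern about whether a single consistent normalization exists is unfounded.
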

We note that this theorem also follows from~\cite{CDZ04, FY22, YZhou} for a wide class of solutions.

%We will give a new proof for the theorem in Section~\ref{sec4}.

The rest of this paper is organized as follows. In Section~\ref{sec2}, we review the MR method for
the study of tau-functions for the AKNS hierarchy. In Section~\ref{sec3}, we construct a pair
of wave functions for the AKNS hierarchy. In Section~\ref{sec4}, we give explicit formulas for
generating series of the $k$-point correlation functions for the AKNS hierarchy in terms
of the wave functions and we prove Theorem~\ref{main1time} and Theorem~\ref{kptau}.

\section{Review of the MR method to tau-functions for the AKNS hierarchy}\label{sec2}
Let us give a brief review of the MR method of computing logarithmic derivatives of tau-functions for the AKNS hierarchy~\cite{D20,FY22}. Let
$\mathcal{A}:=\mathbb{C}[q_{kX},r_{kX}|k\ge 0]$ be the polynomial ring. Here $q_{kX}=\partial_{X}^{k}(q),r_{kX}=\partial_{X}^{k}(r), k\ge 0$. 
Let $\mathcal{L}(\xi)$ be the matrix Lax operator (cf.~e.g.~\cite{AC91,D03, L99, NMPZ84,T04}) 
\begin{align}
	\mathcal{L}(\xi)=\begin{pmatrix}
		\epsilon\partial_{X}&0\\
		0&\epsilon\partial_{X}
	\end{pmatrix}+\begin{pmatrix}
		-\xi & -q\\
		r & \xi\\
	\end{pmatrix}.\label{laxnls}
\end{align}
There exists a unique series $R(\xi)$ satisfying
\begin{align}
	&	R(\xi)-\begin{pmatrix}
		2&0\\
		0&0\\
	\end{pmatrix}
	\in\mathrm{Mat} \left(2, \mathcal{A}[\epsilon]\left[\left[\xi^{-1} \right] \right]\xi^{-1}\right)\label{rnlsdef},\\
	&   \left[\mathcal{L}(\xi), R(\xi)\right]=0,\quad    \Tr R(\xi)=2, \quad \mathrm{det}\ R(\xi)=0.\label{22}	
\end{align}
The unique $R(\xi)$ is called the basic matrix resolvent of $\mathcal{L}(\xi)$. Write
\begin{align}
	&   R(\xi)=\begin{pmatrix}
			2+\sum_{j\ge 0} \frac{A_j}{\xi^{j+1}}&\sum_{j\ge 0} \frac{B_j}{\xi^{j+1}}\\
			\sum_{j\ge 0} \frac{C_j}{\xi^{j+1}}&-\sum_{j\ge 0} \frac{A_j}{\xi^{j+1}}
		\end{pmatrix},
\end{align}
where $A_j,B_j,C_j\in\mathcal{A}[\epsilon]$ and are uniquely determined by the following recurrence relation
\begin{align}
	&	\epsilon\partial_{X}(A_{j})-rB_j-qC_j=0,\quad j\ge 0, \label{abc1}\\
	&	\epsilon\partial_{X}(B_{j})+2qA_j-2B_{j+1}+2q\delta_{-1,j}=0,\quad j\ge -1,\label{abc2}\\
	&	\epsilon\partial_{X}(C_{j})+2rA_j+2C_{j+1}+2r\delta_{-1,j}=0,\quad j\ge -1, \label{abc3}\\
	&      A_k=-\frac{1}{2}\sum_{\substack{ i+j=k-1\\ i,j\ge -1}} (A_iA_j+B_iC_j), \qquad k\geq 0.\label{23}
\end{align}
Here $\delta_{i,j}$ denotes the Kronecker delta and $A_{-1}=B_{-1}=C_{-1}:=0$. 
Recall that the AKNS hierarchy was defined via
\begin{align}\label{Tnlshierarchy}
	\frac{\partial \mathcal{L}(\xi)}{\partial t_{j}}=2^{j}\epsilon^{-1} \Bigl[V_{j}(\xi),\mathcal{L}(\xi)\Bigr],\quad j\ge  0,
\end{align}
where $V_{j}(\xi)=\left(\xi^{j+1}R(\xi)\right)_{+}$. 
Then the following equation holds true:
	\begin{gather}
		\epsilon\nabla(\nu)\left(R(\xi)\right)=\frac{\left[R( \nu), R(\xi)\right]}{\nu-\xi}+\left[-\frac{1}{\nu}\begin{pmatrix}
			2&0\\
			0&0
		\end{pmatrix}, R(\xi)\right]\label{derivition},
	\end{gather}
	where $\nabla(\nu):=\sum_{j\ge 0} \frac{1}{2^j\nu^{j+2}}\frac{\partial}{\partial t_{j}}$.
\textit{The tau-structure}~\cite{CDZ04,DZ,DZ04, FY22} for the AKNS hierarchy~\eqref{Tnlshierarchy} are defined via the generating series
\begin{align}
	\sum_{i, j\ge 0}\frac{1}{\xi^{i+2}\nu^{j+2}}\frac{\Omega_{i,j}}{2^{i+j}}=\frac{\Tr R(\xi)R(\nu)-4}{(\xi-\nu)^2}.   \label{tastr}
\end{align}	
Let $(q({\bf t};\epsilon),r({\bf t};\epsilon))$ be an arbitrary solution to the AKNS hierarchy~\eqref{Tnlshierarchy}. 
Then there exists a function $\tau({\bf t};\epsilon)$, such that
\begin{align}
	\Omega_{i,j}({\bf t};\epsilon)=\epsilon^2\frac{\partial^2 \log\tau({\bf t};\epsilon)}{\partial t_{i}\partial t_{j}},\quad i,j\ge  0.\label{DZtauf}
\end{align}
The function $\tau({\bf t};\epsilon)$ is called the \textit{Dubrovin-Zhang type tau-function} 
of the solution $(q,r)$ to the AKNS hierarchy. 
The function $\tau({\bf t};\epsilon)$ is determined uniquely by the solution $(q,r)$ up to 
multiplying by the exponential of a linear function.
The logarithmic derivatives $\epsilon^k\frac{\partial^k\log\tau({\bf t};\epsilon)}{\partial t_{i_1}\cdots\partial t_{i_k}}=:\Omega_{i_1,\dots,i_k}({\bf t};\epsilon), k\ge 2,\,i_1,\dots,i_k\ge 0$, are called the $k$-point correlation functions of the solution.
For any $k\ge 2$, it was proven in~\cite{D20,FY22} that the following formula holds  true:
	\begin{align}
		&\sum_{i_1,\dots,i_k\ge 0} \epsilon^k\frac{\partial^k\log\tau({\bf t};\epsilon)}{\partial t_{i_1}\cdots\partial t_{i_k}}\prod_{j=1}^{k}\frac{1}{2^{i_j}\xi_{j}^{i_j+2}} \nonumber\\
		=&-\sum_{\sigma\in S_k/ C_k}\frac{\Tr\left(R({\bf t};\xi_{\sigma(1)};\epsilon)\cdots R({\bf t};\xi_{\sigma(k)};\epsilon)\right)}{(\xi_{\sigma(1)}-\xi_{\sigma(2)})\cdots(\xi_{\sigma(k-1)}-\xi_{\sigma(k)})(\xi_{\sigma(k)}-\xi_{\sigma(1)})}-\frac{4\delta_{k,2}}{(\xi_1-\xi_2)^2}.\label{cor2,21}
	\end{align}

\section{A pair of wave functions for the AKNS hierarchy} \label{sec3}

In this section, for an arbitrary solution to the AKNS hierarchy, we define a pair
of wave functions of the solution. 

We first consider the time independent case. 
Let $q=q(X;\epsilon),r=r(X;\epsilon)$ be the element of $V((\epsilon))$. Consider the following matrix operator:
\begin{align}
	\mathcal{L}(X;\xi;\epsilon)=\epsilon\partial_X+\begin{pmatrix}
		-\xi&-q(X;\epsilon)\\
		r(X;\epsilon)&\xi
	\end{pmatrix}.
\end{align}

\begin{Definition}
An element
\begin{align}
	\psi_A(X;\xi;\epsilon)=\phi_{A}(X,\xi;\epsilon)e^{\epsilon^{-1}X\xi}\in\widetilde{V}((\epsilon))[[\xi^{-1}]]e^{\epsilon^{-1}X\xi},\label{wavea}
\end{align}
where $\phi_{A}(X,\xi;\epsilon)=\sum_{k=0}^{+\infty}\phi_{A,k}(X;\epsilon)\xi^{-k}$ and $\phi_{A,0}=1$,
is called the formal wave functions of type A associated with $(q(X;\epsilon), r(X;\epsilon))$ if
\begin{align}
	\mathcal{L}(X;\xi;\epsilon)\begin{pmatrix}
		\psi_A(X;\xi;\epsilon)\\
		\frac{\epsilon\psi_{A,X}(X;\xi;\epsilon)-\xi\psi_A(X;\xi;\epsilon)}{q(X;\epsilon)}
	\end{pmatrix}=0. \label{spectraleqA}
\end{align}
An element
\begin{align}
\psi_B(X;\xi;\epsilon)=\phi_{B}(X,\xi;\epsilon)q(X;\epsilon)e^{-\epsilon^{-1}X\xi}\in\widetilde{V}((\epsilon))[[\xi^{-1}]]e^{\epsilon^{-1}X\xi} ,\label{waveb}
\end{align}
where $\phi_{B}(X,\xi;\epsilon)=\sum_{k=0}^{+\infty}\phi_{B,k}(X;\epsilon)\xi^{-k}$ and $\phi_{B,0}=1$, is called the formal wave functions of type B associated with $(q(X;\epsilon), r(X;\epsilon))$ if
\begin{align}
\mathcal{L}(X;\xi;\epsilon)\begin{pmatrix}
	\psi_B(X;\xi;\epsilon)\\
	\frac{\epsilon\psi_{B,X}(X;\xi;\epsilon)-\xi\psi_B(X;\xi;\epsilon)}{q(X;\epsilon)}
\end{pmatrix}=0. \label{spectraleqB}
\end{align}
\end{Definition}
Let us give a proof of the existence of wave functions of types A and B associated with $(q(X;\epsilon), r(X;\epsilon))$. 
Through equations \eqref{spectraleqA} and \eqref{spectraleqB}, $\psi_A$ and $\psi_B$ emerge as linearly independent solutions of the second-order differential equation:
\begin{align}
	\epsilon^2\psi_{XX}-\epsilon^2\frac{q_{X}}{q}\psi_{X}+\left(\epsilon\frac{q_{X}}{q}\xi+qr-\xi^2\right)\psi=0.\label{waveequation}
\end{align}
Let $f(X;\epsilon) := \dfrac{q_{X}(X;\epsilon)}{q(X;\epsilon)}$ and $g(X;\epsilon) := q(X;\epsilon)r(X;\epsilon)$ for notational simplicity. 
Write
\begin{align}
	&\psi_A(X;\xi;\epsilon)=e^{\epsilon^{-1}\partial_X^{-1}x(X;\xi;\epsilon)}e^{\epsilon^{-1}X\xi},\quad x(X;\xi;\epsilon)=\sum_{k\ge 1}^{}\frac{x_k(X;\epsilon)}{\xi^k},\\
	&\psi_B(X;\xi;\epsilon)=e^{\epsilon^{-1}\partial_X^{-1}y(X;\xi;\epsilon)}q(X;\epsilon)e^{-\epsilon^{-1}X\xi},\quad y(X;\xi;\epsilon)=\sum_{k\ge 1}^{}\frac{y_k(X;\epsilon)}{\xi^k}.
\end{align}
Then, the equation (\ref{waveequation}) for $\psi=\psi_A$ and for $\psi=\psi_B$ recast into the following equations:
\begin{align}
	&\epsilon x_{X}(X;\xi;\epsilon)+x(X;\xi;\epsilon)^2+(2\xi-\epsilon f(X;\epsilon))x(X;\xi;\epsilon)+g(X;\epsilon)=0,\label{eq1}\\
	&\epsilon y_{X}(X;\xi;\epsilon)+y(X;\xi;\epsilon)^2+(\epsilon f(X;\epsilon)-2\xi)y(X,\xi)+ \epsilon^{2}f_{X}(X;\epsilon)+g(X;\epsilon)=0.\label{eq2}
\end{align}
These yield recursive relations for the coefficients:
\begin{align}
	&x_{k+1}=\frac{1}{2}(-\epsilon x_{k,X}-\sum_{m=1}^{k-1}x_mx_{k-m}+\epsilon fx_k)-\frac{1}{2}g\delta_{k,0},\label{solution1}\\
	&y_{k+1}=\frac{1}{2}(\epsilon y_{k,X}+\sum_{m=1}^{k-1}x_mx_{k-m}+\epsilon fy_k)+\frac{1}{2}(\epsilon^2f_{X}+g)\delta_{k,0},
\end{align}
for $k\ge 0$.  
From these recursions, it easily follows that $x_k,y_k\in V((\epsilon))$, $k\geq 0$.  This proves the existence of wave functions of type A and type B meeting the definitions \eqref{spectraleqA}--\eqref{spectraleqB}.
Clearly, $\psi_A$ and $\psi_B$ are unique up to multiplying by 
$G(\xi;\epsilon)\in 1+ \mathbb{C}((\epsilon))[[\xi^{-1}]]\xi^{-1}$ and 
$E(\xi;\epsilon) \in 1+ \mathbb{C}((\epsilon))[[\xi^{-1}]]\xi^{-1}$.
Define
\begin{align}
	d(X;\xi;\epsilon)=\epsilon\frac{\psi_A(X;\xi;\epsilon)\psi_{B,X}(X;\xi;\epsilon)-\psi_{A,X}(X;\xi;\epsilon)\psi_B(X;\xi;\epsilon)}{q(X;\epsilon)}.\label{defined}
\end{align}
We call $\psi_A,\psi_B$ form \textit{a pair} if $d(X;\xi;\epsilon)=-2\xi$.
Using~\eqref{wavea} and~\eqref{waveb}, 
we find that the $d(X;\xi;\epsilon)$ defined in \eqref{defined} must have the form $d(X;\xi;\epsilon)=-2\xi e^{\sum_{k\ge 1}^{}d_k(X;\epsilon)\xi^{-k}}$.
Then by using \eqref{waveequation}, one can easily derive 
\begin{align}
	\frac{\partial (d(X;\xi;\epsilon))}{\partial X}=0.
\end{align}
It follows that all $d_{k}(X;\epsilon),k\ge 0$, are constants. Therefore, for any fixed choice of $\psi_A,$ we can suitably choose the factor $E(\xi;\epsilon)$ for $\psi_B$ such that $\psi_A,\psi_B$ form a pair. This proves the exitence of pair of wave functions associated to $(q(X;\epsilon),r(X;\epsilon))$.

We proceed with the time dependent case.  Recall that the definition of the two wave functions of type A $\psi_A(\bt;\xi;\epsilon)$ and type B $\psi_B(\bt;\xi;\epsilon)$ associated with $(q(\bt;\epsilon),r(\bt;\epsilon))$ in Section~\ref{sec1}.

\begin{lemma}\label{exist-wave}
The functions $\psi_A(\bt;\xi;\epsilon)$ and $\psi_B(\bt;\xi;\epsilon)$, defined via~Definition~\ref{t-wavefun-def}, exist.
\end{lemma}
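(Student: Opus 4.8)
The plan is to build the time-dependent wave functions from the time-independent ones constructed above, then verify that they satisfy the flow equations \eqref{spectraleqtimeA} and \eqref{spectraleqtimeB} by exploiting the zero-curvature structure behind the AKNS hierarchy. First I would fix $\bt_{>0}=0$ and invoke the construction already carried out in Section~\ref{sec3}: this gives $\psi_A(X;\xi;\epsilon)$ and $\psi_B(X;\xi;\epsilon)$ annihilated by $\mathcal{L}(X;\xi;\epsilon)$, of the prescribed exponential-times-series form with leading coefficients $1$ and $q$ respectively. The key structural observation is that when $(q(\bt;\epsilon),r(\bt;\epsilon))$ solves the AKNS hierarchy, equation~\eqref{derivition} together with~\eqref{Tnlshierarchy} expresses a compatibility (zero-curvature) condition among the operators $\mathcal{L}(\bt;\xi;\epsilon)$ and $\mathcal{V}_k(\bt;\xi;\epsilon):=\epsilon^{-1}2^{k}\bigl(\bigl(\xi^{k+1}R(\bt;\xi;\epsilon)\bigr)_{+}-\xi^{k+1}I\bigr)$, namely $\partial_{t_k}\mathcal{L}=[\mathcal{V}_k,\mathcal{L}]$ and the $t_k$-flows of the $\mathcal{V}_j$'s mutually commute. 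Hence the overdetermined linear system ``$\mathcal{L}\Psi=0$ and $\partial_{t_k}\Psi=\mathcal{V}_k\Psi$ for all $k$'' is consistent, and has a solution space of rank $2$ over the appropriate ring of constants.

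Concretely, I would define $\Psi_A(\bt;\xi;\epsilon)$ to be the unique solution of the compatible linear system $\partial_{t_k}\Psi_A=\mathcal{V}_k(\bt;\xi;\epsilon)\Psi_A$ ($k\ge 0$) with initial value at $\bt_{>0}=0$ equal to the time-independent column vector $\bigl(\psi_A(X;\xi;\epsilon),\,\tfrac{\epsilon\psi_{A,X}-\xi\psi_A}{q}\bigr)^T$ built above; similarly for $\Psi_B$. Existence and uniqueness of this solution in $\widetilde{V}((\epsilon))[[\xi^{-1}]][[\bt_{>0}]]^2$ follows by solving order-by-order in the $\bt_{>0}$-grading: since $\mathcal{V}_k$ has entries in $V[\epsilon]((\epsilon))[[\bt_{>0}]]$ and raises (or preserves) the $\bt_{>0}$-degree appropriately, each Taylor coefficient is determined recursively, and the mixed partials agree by the zero-curvature identity derived from~\eqref{derivition}. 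One then has to check that $\mathcal{L}(\bt;\xi;\epsilon)\Psi_A=0$ is propagated by the flows: since $\partial_{t_k}(\mathcal{L}\Psi_A)=(\partial_{t_k}\mathcal{L})\Psi_A+\mathcal{L}\mathcal{V}_k\Psi_A=[\mathcal{V}_k,\mathcal{L}]\Psi_A+\mathcal{L}\mathcal{V}_k\Psi_A=\mathcal{V}_k\mathcal{L}\Psi_A$, the quantity $\mathcal{L}\Psi_A$ satisfies the same homogeneous linear flow with zero initial data, hence vanishes identically; this guarantees that the first component $\psi_A(\bt;\xi;\epsilon)$ and second component are related exactly as in the statement of Definition~\ref{t-wavefun-def}.

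It remains to check the prescribed \emph{form} of $\psi_A$ and $\psi_B$ — that $\psi_A=\phi_A e^{\epsilon^{-1}\sum_{k\ge 0}2^k t_k\xi^{k+1}}$ with $\phi_A\in\widetilde V((\epsilon))[[\xi^{-1}]]$, $\phi_{A,0}=1$, and the analogous statement for $\psi_B$ with the overall factor $q(\bt;\epsilon)$ and the opposite exponential. For this I would substitute this ansatz into the flow equations and observe that the exponential prefactor is engineered precisely to cancel the $-\xi^{k+1}I$ term and the leading $\xi^{k+1}$-behaviour of $(\xi^{k+1}R)_+$: writing $(\xi^{k+1}R(\bt;\xi;\epsilon))_+=2^{?}\cdots$, the diagonal leading term of $\xi^{k+1}R$ is $\xi^{k+1}\,\mathrm{diag}(2,0)$ for type A and effectively $\mathrm{diag}(0,2)$-type behaviour after conjugation for type B, so that $\phi_A$ (resp. $\phi_B$) satisfies flow equations with only non-positive powers of $\xi$ on the right and therefore stays in $\widetilde V((\epsilon))[[\xi^{-1}]]$, with the normalization $\phi_{A,0}=\phi_{B,0}=1$ preserved because it holds at $\bt_{>0}=0$ and the flows fix the $\xi^0$-coefficient. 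The main obstacle I expect is the bookkeeping in this last step: one must carefully track how $(\xi^{k+1}R)_+$ acts on the two-component vector once the exponential gauge factor is stripped off — in particular showing that the off-diagonal and the $q$-dependent pieces do not spoil membership in $\widetilde V((\epsilon))[[\xi^{-1}]]$ — and separately confirm that the second component of $\Psi_A$ produced by the flow really equals $\tfrac{\epsilon\psi_{A,X}-\xi\psi_A}{q}$, which uses $\mathcal{L}\Psi_A=0$ established above together with the $X=t_0$ flow. Everything else is a routine order-by-order recursion.
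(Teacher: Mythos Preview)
Your approach is correct in outline but takes a genuinely different route from the paper. The paper does \emph{not} propagate the time-independent wave functions via zero-curvature compatibility; instead it invokes the scalar pseudo-differential Lax operator $L=\epsilon\partial_X+q\circ(\epsilon\partial_X)^{-1}\circ r$ of the $1$-constrained KP hierarchy, its dressing operator $\phi$, and the associated KP wave function $\psi$ and dual wave function $\psi^{*}$. Using identities from \cite{C92,BLX96} it rewrites $(L^{k+1})_{+}$ and $((L^{*})^{k+1})_{+}$ in terms of the matrix-resolvent entries $A_j,B_j,C_j$, and then sets $\psi_A:=\psi\,e^{-\alpha}$ and $\psi_B:=-\lambda\,\dfrac{\epsilon\psi^{*}_X+\lambda\psi^{*}}{r}\,e^{\alpha}$ with $\lambda=2\xi$, verifying directly that these satisfy the matrix flow equations of Definition~\ref{t-wavefun-def}.

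The trade-off is clear. Your zero-curvature argument is more self-contained --- it stays entirely within the $2\times 2$ matrix formalism, uses only the compatibility $\partial_{t_k}\mathcal{L}=[\mathcal{V}_k,\mathcal{L}]$ and the mutual commutativity encoded in~\eqref{derivition}, and does not appeal to the pseudo-differential calculus or the external results of \cite{C92,BLX96}. In exchange, you must pay for the ``form'' verification: showing that after stripping the exponential the remainder stays in $\widetilde V((\epsilon))[[\xi^{-1}]]$ with leading coefficient~$1$ is, as you note, not automatic --- the flow for $\phi_A$ has coefficients of positive $\xi$-degree, and the needed cancellations (e.g.\ $\epsilon\,\phi_{A,1,X}+A_1=0$) come from the actual recursions \eqref{abc1}--\eqref{23} and~\eqref{solution1}, so this step needs a genuine inductive argument rather than mere bookkeeping. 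The paper's route sidesteps this entirely, since the dressing-operator construction produces $\psi$ and $\psi^{*}$ already in the correct form, and as a bonus makes the link to KP tau-functions (used later in Theorem~\ref{kptau}) explicit from the start.
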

\begin{proof}
We prove the lemma by relating $\psi_A$ and $\psi_B$ to the wave function and dual wave function of the AKNS hierarchy.
 Let 
$\phi=\sum_{i=0}^{+\infty}w_{i}(\epsilon\partial_{X})^{-i},w_{0}=1,$ be the dressing operator of the scaled AKNS Lax operator $L=\epsilon\partial_{X}+q\circ(\epsilon\partial_{X})^{-1}\circ r$,
so that 
$
	L=\phi\circ (\epsilon\partial_{X})\circ \phi^{-1}.
$
Setting $\lambda=2\xi$, the wave function $\psi=\psi({\bt};\lambda;\epsilon)$ and the dual wave function $\psi^{*}=\psi^{*}({\bt};\lambda;\epsilon)$ are defined by
\begin{align}
	\psi=\phi(e^{\epsilon^{-1}\sum_{k=0}^{+\infty}t_{k}\lambda^{k+1}}),\quad 	\psi^{*}=(\phi^{-1})^{*}(e^{-\epsilon^{-1}\sum_{k=0}^{+\infty}t_{k}\lambda^{k+1}}),
\end{align}
where $(\phi^{-1})^{*}$ denotes the dual operator of $\phi^{-1}$.
They satisfy
\begin{align}
	&L\psi=\lambda \psi,\label{scl-wave-1'}\\
	& \frac{\partial \psi}{\partial t_{k}}=\epsilon^{-1}\left(L^{k+1}\right)_{+}\psi,\label{scl-wave-1}\\
	&L^{*}\psi^{*}=\lambda \psi^{*},\label{scl-wave-2'}\\
	& \frac{\partial \psi^{*}}{\partial t_{k}}=-\epsilon^{-1}\left((L^{*})^{k+1}\right)_{+}\psi^{*},\label{scl-wave-2}
\end{align}
where $k\ge 0$, $L^{*}=-\epsilon\partial_{X}-r\circ(\epsilon\partial_{X})^{-1}\circ q$ and $\lambda=2\xi$. It has proved in~\cite{C92}
\begin{align}
	&\left(L^{k+1}\right)_{+}(q)=\epsilon\partial_{X}\left(\left(L^{k}\right)_{+}(q)\right)+2q(\epsilon \partial_{X})^{-1}\left(r\left(L^{k}\right)_{+}(q)-q\left(\left(L^{*}\right)^{k}\right)_{+}(r)\right),\\
	&\left((L^{*})^{k+1}\right)_{+}(r)=-\epsilon\partial_{X}\left(\left((L^{*})^{k}\right)_{+}(r)\right)+2r(\epsilon \partial_{X})^{-1}\left(r\left(L^{k}\right)_{+}(q)-q\left(\left(L^{*}\right)^{k}\right)_{+}(r)\right).
\end{align}
Then, by using~\eqref{abc1}--\eqref{23} and $A_{0}=0,B_{0}=q,C_{0}=-r$,
we derive the relations
\begin{align}
	&(\epsilon\partial_{X})^{-1}\left(r\left(L^{k}\right)_{+}(q)-q\left(\left(L^{*}\right)^{k}\right)_{+}(r)\right)={\rm res}_{\epsilon\partial}L^{k}=-{\rm res}_{\epsilon\partial}(L^*)^{k}=2^{k}A_{k},\\
	&\left(L^{k}\right)_{+}(q)=2^{k}B_{k},\quad -\left(\left(L^{*}\right)^{k}\right)_{+}(r)=2^{k}C_{k}.
\end{align}
By using these relations and the following formula in~\cite{BLX96}
\begin{align}
	&\left(L^{k+1}\right)_{+}=\sum_{l=0}^{k}{\rm res}_{\epsilon\partial}L^{l}\circ L^{k-l}+L^{k+1}-\sum_{l=0}^{k}\left(L^{l}\right)_{+}(q)\circ(\epsilon\partial_{X})^{-1}\circ r\circ L^{k-l},\\
	&\left((L^{*})^{k+1}\right)_{+}=-\sum_{l=0}^{k}{\rm res}_{\epsilon\partial}(L^{*})^{l}\circ (L^{*})^{k-l}+(L^{*})^{k+1}+\sum_{l=0}^{k}\left((L^{*})^{l}\right)_{+}(r)\circ(\epsilon\partial_{X})^{-1}\circ q\circ (L^{*})^{k-l},
\end{align}
we can rewrite the equation~\eqref{scl-wave-1} and~\eqref{scl-wave-2} by using the elements of the matrix-resolvent $R(\lambda)$ 
\begin{align}
	 &\frac{\partial \psi}{\partial t_{k}}=\epsilon^{-1}\left(\lambda^{k+1}+\sum_{l=0}^{k}2^{l}A_{l}\lambda^{k-l}\right)\psi+\epsilon^{-1}\sum_{l=0}^{k}2^{l}B_{l}\lambda^{k-l}\frac{\epsilon\psi_{X}-\lambda \psi}{q},\label{1}\\
	 &\frac{\partial \psi^{*}}{\partial t_{k}}=-\epsilon^{-1}\sum_{l=0}^{k}2^{l}C_{l}\lambda^{k-l}\frac{\epsilon\psi^{*}_{X}+\lambda \psi^{*}}{r}-\epsilon^{-1}\left(\lambda^{k+1}+\sum_{l=0}^{k}2^{l}A_{l}\lambda^{k-l}\right)\psi^{*}.\label{2}
\end{align}
By using~\eqref{1}--\eqref{2}, \eqref{scl-wave-1'}, \eqref{scl-wave-2'} and~\eqref{abc1}--\eqref{abc3}, we obtain  
%\clr{We have}
\begin{align}
	&\frac{\partial}{\partial t_{k}}\left(\frac{\epsilon\psi_{X}-\lambda \psi}{q}\right)=\epsilon^{-1}\sum_{l=0}^{k}2^{l}C_{l}\lambda^{k-l} \psi-\epsilon^{-1}\sum_{l=0}^{k}2^{l}A_{l}\lambda^{k-l}\frac{\epsilon\psi_{X}-\lambda \psi}{q},\\
	&\frac{\partial}{\partial t_{k}}\left(\frac{\epsilon\psi^{*}_{X}+\lambda \psi^{*}}{r}\right)=\epsilon^{-1}\sum_{l=0}^{k}2^{l}A_{l}\lambda^{k-l}\frac{\epsilon\psi^{*}_{X}+\lambda \psi^{*}}{r}-\epsilon^{-1}\sum_{l=0}^{k}2^{l}B_{l}\lambda^{k-l}\psi^{*}.
\end{align}
We define
\begin{align}
	\psi_{A}:=\psi e^{-\alpha({\bt },\xi)},\quad \psi_{B}:=-\lambda\frac{\epsilon\psi^{*}_{X}+\lambda \psi^{*}}{ r}e^{\alpha({\bt },\xi)},\quad \alpha({\bt },\xi):=\epsilon^{-1}\sum_{k\ge 0}2^{k}t_{k}\xi^{k+1}.
\end{align}
By a direct computation, we have
\begin{align}
	\frac{\epsilon\psi_{A,X}-\xi\psi_A}{q}=\frac{\epsilon\psi_{X}-\lambda \psi}{q}e^{-\alpha({\bt },\xi)},\quad \frac{\epsilon\psi_{B,X}-\xi\psi_B}{q}=\lambda \psi^{*} e^{\alpha({\bt },\xi)}.\label{6}
\end{align}
From equations~\eqref{1}--\eqref{6}, we conclude that $\psi_A$ and $\psi_B$ satisfy all the requirements in Definition~\ref{t-wavefun-def}.
The lemma is proved.
\end{proof}

We define
\begin{align}
	m(\bt;\xi,\nu;\epsilon):=\epsilon^{-1}\left(\frac{R(\bt;\nu;\epsilon)}{\nu-\xi}+Q(\nu)\right).
\end{align}
We know from~\cite{FY22} that the matrix function $\Psi(\bt;\xi;\epsilon)=\left(\Psi_{A}(\bt;\xi;\epsilon),\Psi_{B}(\bt;\xi;\epsilon)\right)$ satisfies
\begin{align}
	\nabla(\nu)\left(\Psi(\bt;\xi;\epsilon)\right)=\left(m(\bt;\xi,\nu;\epsilon)-\epsilon^{-1}\frac{\xi}{\nu(\nu-\xi)}\right)\Psi(\bt;\xi;\epsilon).\label{wave2}
\end{align}
\begin{lemma}\label{lemmazero}
The following formula holds true:
\begin{align}
	\nabla(\nu) \Bigl(d(\bt;\xi;\epsilon)\Bigr)=0.
\end{align}	
\end{lemma}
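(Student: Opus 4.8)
The plan is to recognize that $d(\bt;\xi;\epsilon)$ is nothing but the determinant of the fundamental matrix $\Psi(\bt;\xi;\epsilon)=\left(\Psi_A(\bt;\xi;\epsilon),\Psi_B(\bt;\xi;\epsilon)\right)$, and then to run the same argument that was used in the time-independent case for $\partial_X(d)=0$, now with the linear evolution equation~\eqref{wave2} in place of the spectral equation. Concretely, I would first write
\begin{align}
	\Psi(\bt;\xi;\epsilon)=\begin{pmatrix}
		\psi_A & \psi_B\\[2pt]
		\dfrac{\epsilon\psi_{A,X}-\xi\psi_A}{q} & \dfrac{\epsilon\psi_{B,X}-\xi\psi_B}{q}
	\end{pmatrix}
\end{align}
and compute its determinant directly: the two contributions proportional to $\xi$ cancel, leaving $\det\Psi(\bt;\xi;\epsilon)=\epsilon\,\dfrac{\psi_A\psi_{B,X}-\psi_{A,X}\psi_B}{q}=d(\bt;\xi;\epsilon)$. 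Thus it suffices to prove $\nabla(\nu)\bigl(\det\Psi(\bt;\xi;\epsilon)\bigr)=0$.

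Since $\nabla(\nu)=\sum_{j\ge 0}\frac{1}{2^j\nu^{j+2}}\frac{\partial}{\partial t_j}$ is a derivation, the Jacobi-type identity for determinants gives
\begin{align}
	\nabla(\nu)\bigl(\det\Psi\bigr)=\tr\Bigl(\mathrm{adj}(\Psi)\,\nabla(\nu)(\Psi)\Bigr),
\end{align}
where $\mathrm{adj}(\Psi)$ is the adjugate matrix; no invertibility of $\Psi$ is needed here, since $\mathrm{adj}(\Psi)\,\Psi=\Psi\,\mathrm{adj}(\Psi)=\det(\Psi)\,I$ holds as a formal identity. By~\eqref{wave2} one has $\nabla(\nu)(\Psi)=M\,\Psi$ with $M:=m(\bt;\xi,\nu;\epsilon)-\epsilon^{-1}\dfrac{\xi}{\nu(\nu-\xi)}I$, so using cyclicity of the trace together with $\Psi\,\mathrm{adj}(\Psi)=\det(\Psi)\,I$ we obtain $\nabla(\nu)\bigl(\det\Psi\bigr)=\det(\Psi)\,\tr(M)=d(\bt;\xi;\epsilon)\,\tr(M)$.

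It then remains only to check $\tr(M)=0$. From $m(\bt;\xi,\nu;\epsilon)=\epsilon^{-1}\bigl(\frac{R(\bt;\nu;\epsilon)}{\nu-\xi}+Q(\nu)\bigr)$ together with $\Tr R(\bt;\nu;\epsilon)=2$ and $\tr Q(\nu)=-\frac{2}{\nu}$, one gets $\tr m(\bt;\xi,\nu;\epsilon)=\epsilon^{-1}\bigl(\frac{2}{\nu-\xi}-\frac{2}{\nu}\bigr)=\epsilon^{-1}\frac{2\xi}{\nu(\nu-\xi)}$, which cancels exactly against the contribution $-2\epsilon^{-1}\frac{\xi}{\nu(\nu-\xi)}$ of the scalar shift term $-\epsilon^{-1}\frac{\xi}{\nu(\nu-\xi)}I$. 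Hence $\tr(M)=0$, and therefore $\nabla(\nu)\bigl(d(\bt;\xi;\epsilon)\bigr)=0$. The argument is essentially mechanical once $d$ is identified with $\det\Psi$; the only point that requires a little care is the bookkeeping of the scalar shift in~\eqref{wave2} (a multiple of the identity, hence contributing twice its value to the trace) and making sure the formal-series manipulations with $\mathrm{adj}(\Psi)$ are valid without assuming invertibility.
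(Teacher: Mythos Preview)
Your proof is correct and follows essentially the same route as the paper's: identify $d=\det\Psi$, apply the Jacobi formula $\nabla(\nu)(\det\Psi)=\tr\bigl(\mathrm{adj}(\Psi)\,\nabla(\nu)(\Psi)\bigr)$, substitute~\eqref{wave2}, and verify that the trace of the coefficient matrix vanishes. Your write-up is in fact slightly more explicit than the paper's in spelling out the trace computation and in noting that the adjugate identity avoids any invertibility assumption.
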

\begin{proof}
Recalling the definition~\eqref{definedtime} for $d(\bt;\xi;\epsilon)$ and using~\eqref{wave2}, we find
\begin{align}
\nabla(\nu) \Bigl(d(\bt;\xi;\epsilon)\Bigr)&=\nabla(\nu)\Bigl({\rm det\,}\left(\Psi(\bt;\xi;\epsilon)\right)\Bigr)\nonumber\\
&={\rm tr\,}\Bigl((\Psi(\bt;\xi;\epsilon))^*\nabla(\nu)(\Psi(\bt;\xi;\epsilon))\Bigr) \nonumber\\
& = \left({\rm tr\,}(m(\bt;\xi,\nu;\epsilon))-\epsilon^{-1}\frac{2\xi}{\nu(\nu-\xi)}\right)d(\bt;\xi;\epsilon)=0,
\end{align}
where $(\Psi(\bt;\xi;\epsilon))^*$ means  the  adjoint matrix of $\Psi(\bt;\xi;\epsilon)$.
The lemma is proved.
\end{proof}
The next lemma shows the exitence of a pair.
\begin{lemma}\label{pair-wave}
	There exist a pair of wave functions $\psi_A,\psi_B$ associate to  $(q(\bt;\epsilon),r(\bt;\epsilon))$. Moreover, the freedom of the pair is characterized by a factor $G(\xi;\epsilon)$ via
	\begin{align}
		&\psi_A(\bt;\xi;\epsilon)\mapsto G(\xi;\epsilon)\psi_A(\bt;\xi;\epsilon),\quad 	\psi_B(\bt;\xi;\epsilon)\mapsto \frac{1}{G(\xi;\epsilon)}\psi_B(\bt;\xi;\epsilon),\label{psi12g}\\
		& G(\xi;\epsilon)=\sum_{j\ge 0}^{}G_j(\epsilon)\xi^{-j},\quad G_0=1,\quad G_j\in\mathbb{C}((\epsilon)),\quad j\ge 1.\label{constant}
	\end{align}
%	with $G_j\in\mathbb{C}((\epsilon)),j\ge 1.$
\end{lemma}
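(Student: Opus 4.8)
The plan is to establish existence of a pair and then the precise description of the remaining freedom, by combining the time-independent analysis already done in Section~\ref{sec3} with Lemma~\ref{lemmazero}. First I would fix an arbitrary choice of wave functions $\psi_A(\bt;\xi;\epsilon)$, $\psi_B(\bt;\xi;\epsilon)$ meeting Definition~\ref{t-wavefun-def}, whose existence is guaranteed by Lemma~\ref{exist-wave}. From the explicit forms \eqref{waveta} and \eqref{wavetb}, the exponential factors $e^{\pm\epsilon^{-1}\sum_k 2^k t_k\xi^{k+1}}$ cancel in the Wronskian-type combination \eqref{definedtime}, so $d(\bt;\xi;\epsilon)$ lies in $-2\xi\bigl(1+\widetilde V((\epsilon))[[\xi^{-1}]]\xi^{-1}\bigr)$; I would write $d(\bt;\xi;\epsilon)=-2\xi\,e^{\sum_{k\ge 1}d_k(\bt;\epsilon)\xi^{-k}}$ as in the time-independent discussion.

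Next I would pin down that $d$ is in fact independent of \emph{all} the variables. By Lemma~\ref{lemmazero} we already have $\nabla(\nu)(d)=0$, and reading off coefficients of $\nu^{-j-2}$ gives $\partial d/\partial t_j=0$ for every $j\ge 1$; the time-independent computation in Section~\ref{sec3} gives $\partial d/\partial X=0$. Hence every $d_k(\bt;\epsilon)$ is a constant in $\CC((\epsilon))$, i.e.\ $d(\bt;\xi;\epsilon)=-2\xi\,e^{\sum_{k\ge1}d_k(\epsilon)\xi^{-k}}=:-2\xi/G_0(\xi;\epsilon)^2$ for some series $G_0(\xi;\epsilon)\in 1+\CC((\epsilon))[[\xi^{-1}]]\xi^{-1}$. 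Replacing $\psi_B$ by $G_0(\xi;\epsilon)^2\psi_B$ — equivalently, applying the transformation \eqref{psi12g} with a suitable $\xi$-dependent but $\bt$-independent factor — makes $d=-2\xi$, which proves existence of a pair.

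For the second assertion I would check that \eqref{psi12g} preserves Definition~\ref{t-wavefun-def}: multiplying $\psi_A$ by $G(\xi;\epsilon)$ multiplies the whole vector $\Psi_A$ by $G(\xi;\epsilon)$ (the factor is $\bt$-independent, so it commutes with $\partial_{t_k}$ and with the $\xi$-dependent matrix on the right-hand side of \eqref{spectraleqtimeA}), and similarly for $\psi_B$ with $1/G$; the forms \eqref{waveta}, \eqref{wavetb} and the normalizations $\phi_{A,0}=\phi_{B,0}=1$ force $G_0=1$. Under \eqref{psi12g} the quantity $d$ is manifestly invariant (the $G$ and $1/G$ cancel in \eqref{definedtime}), so the pair condition \eqref{paircond} is preserved. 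Conversely, if $(\psi_A,\psi_B)$ and $(\wt\psi_A,\wt\psi_B)$ are two pairs, then $\wt\psi_A/\psi_A$ is a ratio of two solutions of the same system normalized the same way, hence a $\bt$-independent series $G(\xi;\epsilon)=1+O(\xi^{-1})$ by the uniqueness-up-to-constant statements already recorded in Section~\ref{sec3} (time-independent case) together with Lemma~\ref{lemmazero}-type rigidity in the times; imposing $d=-2\xi$ for both pairs forces $\wt\psi_B=\psi_B/G$.

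I expect the main obstacle to be the converse (rigidity) direction: one must argue that any two admissible wave functions of type A differ only by a $\bt$-independent factor. The cleanest route is to note that the ratio $G:=\wt\psi_A/\psi_A\in 1+\widetilde V((\epsilon))[[\xi^{-1}]]\xi^{-1}$ satisfies, by subtracting the two copies of \eqref{spectraleqtimeA} (and using that the second components are determined by the first via $\cL$), a first-order linear homogeneous system $\partial_{t_k} G = 0$ for all $k\ge0$ including $k=0$ ($=\partial_X$); since $G|_{\bt=0}$ is already a constant series by the time-independent uniqueness, $G$ is constant in $\bt$. Then the pair normalization determines the $\psi_B$ side uniquely, completing the characterization.
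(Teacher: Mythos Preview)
Your proposal is correct and matches the paper's approach: the paper's own proof of this lemma is the single sentence ``The proof is similar to that in~\cite{FLY24,Y20} by using Lemma~\ref{lemmazero}, so we omit its details,'' and what you have written is exactly the natural filling-in of those omitted details---existence of $\psi_A,\psi_B$ from Lemma~\ref{exist-wave}, constancy of $d$ in all $t_k$ from Lemma~\ref{lemmazero}, then a $\bt$-independent rescaling of $\psi_B$. One minor clean-up in the rigidity paragraph: ``subtracting the two copies of~\eqref{spectraleqtimeA}'' is tautological at $k=0$, so you should invoke the time-independent Riccati uniqueness~\eqref{solution1} \emph{first} to obtain $G_X=0$ (hence $\widetilde\Psi_A=G\Psi_A$), after which the $k\ge 1$ equations immediately give $\partial_{t_k}G=0$ as you intend.
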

\begin{proof}
The proof is similiar to that in~\cite{FLY24,Y20} by using 	Lemma~\ref{lemmazero}, so we omite its details.
\end{proof}

\section{Explicit formulas for the $k$-point functions}\label{sec4}
In this section, we derive two new formulas for the generating series of the $k$-point correlation
functions of an arbitrary solution to the AKNS hierarchy.

Let $(q,r)=(q(\bt;\epsilon),r(\bt;\epsilon))\in V((\epsilon))[[\bt_{\ge 1}]]^2$ be the unique solution to the AKNS hierarchy 
with the initial value $(q(\bt;\epsilon),r(\bt;\epsilon))\vert_{\bt_{\ge 1}=0}=(q(X;\epsilon),r(X;\epsilon))$, and $(\psi_A,\psi_B)$ be a pair of wave functions associated to~$(q,r)$.  
\begin{pro} \label{propRP}
	The following identity holds true: 
\begin{align}\label{RP}
	R(\bt;\xi;\epsilon) = \Psi(\bt;\xi;\epsilon) \begin{pmatrix} 2 & 0 \\ 0 & 0 \end{pmatrix} 
	\Psi^{-1}(\bt;\xi;\epsilon).  
\end{align}
\end{pro}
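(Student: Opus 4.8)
The plan is to show that the matrix
$\widetilde R(\bt;\xi;\epsilon):=\Psi(\bt;\xi;\epsilon)\,P\,\Psi(\bt;\xi;\epsilon)^{-1}$, with $P:=\begin{pmatrix}2&0\\0&0\end{pmatrix}$, satisfies the three conditions that characterize $R(\bt;\xi;\epsilon)$ uniquely — namely the $\bt$-dependent counterparts of \eqref{rnlsdef}--\eqref{22} obtained by substituting the solution $(q(\bt;\epsilon),r(\bt;\epsilon))$ — and then to invoke that uniqueness. First, $\widetilde R$ is well defined because $(\psi_A,\psi_B)$ form a pair: by \eqref{definedtime}--\eqref{paircond} one has $\det\Psi(\bt;\xi;\epsilon)=d(\bt;\xi;\epsilon)=-2\xi$, so $\Psi$ is invertible. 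The trace and determinant conditions are then immediate, $\Tr\widetilde R=\Tr P=2$ and $\det\widetilde R=\det P=0$.

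Next I would prove $[\mathcal{L}(\xi),\widetilde R]=0$, where $\mathcal{L}(\xi)=\epsilon\partial_X+\begin{pmatrix}-\xi&-q(\bt;\epsilon)\\ r(\bt;\epsilon)&\xi\end{pmatrix}$. The only ingredient is the $k=0$ case of \eqref{spectraleqtimeA}--\eqref{spectraleqtimeB} (recall $t_0=X$): since $A_0=0$, $B_0=q$, $C_0=-r$, one has $(\xi R)_+-\xi I=M$ with $M:=\begin{pmatrix}\xi&q\\ -r&-\xi\end{pmatrix}$, so both columns of $\Psi$ satisfy $\epsilon\partial_X\Psi=M\Psi$, i.e.\ $\mathcal{L}(\xi)\Psi=0$. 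Differentiating $\widetilde R=\Psi P\Psi^{-1}$ and using $\epsilon\partial_X\Psi=M\Psi$ together with $\epsilon\partial_X(\Psi^{-1})=-\Psi^{-1}M$ yields $\epsilon\partial_X\widetilde R=[M,\widetilde R]$; since the matrix part of $\mathcal{L}(\xi)$ is $-M$, this is exactly $[\mathcal{L}(\xi),\widetilde R]=\epsilon\partial_X\widetilde R+[-M,\widetilde R]=0$.

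It then remains to check the normalization $\widetilde R-P\in\mathrm{Mat}\bigl(2,\ V((\epsilon))[[\bt_{\ge 1}]][[\xi^{-1}]]\xi^{-1}\bigr)$, which is the one genuine computation. From \eqref{waveta}, \eqref{wavetb} and the identity $\partial_X\bigl(\epsilon^{-1}\sum_{k\ge 0}2^k t_k\xi^{k+1}\bigr)=\epsilon^{-1}\xi$, a short calculation gives the two columns of $\Psi$ as $e^{\alpha}\bigl(\phi_A,\ \epsilon\phi_{A,X}/q\bigr)^{T}$ and $e^{-\alpha}\bigl(\phi_B q,\ -2\xi\phi_B+\epsilon(\phi_{B,X}+\phi_B q_X/q)\bigr)^{T}$, where $\alpha:=\epsilon^{-1}\sum_{k\ge 0}2^k t_k\xi^{k+1}$ and $\phi_A,\phi_B=1+O(\xi^{-1})$ with coefficients in $\widetilde V((\epsilon))[[\bt_{\ge 1}]]$. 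Using $\det\Psi=-2\xi$, the adjugate (cofactor) formula for $\Psi^{-1}$ gives $\widetilde R=-\xi^{-1}\begin{pmatrix}\Psi_{11}\Psi_{22}&-\Psi_{11}\Psi_{12}\\ \Psi_{21}\Psi_{22}&-\Psi_{21}\Psi_{12}\end{pmatrix}$; each entry there pairs a type-$A$ column factor (carrying $e^{\alpha}$) with a type-$B$ column factor (carrying $e^{-\alpha}$), so the exponentials cancel and $\widetilde R$ is a genuine series in $\xi^{-1}$, and expanding to leading order one reads off $\widetilde R_{11}=2+O(\xi^{-1})$ and $\widetilde R_{12},\widetilde R_{21},\widetilde R_{22}=O(\xi^{-1})$. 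By the uniqueness of the solution of \eqref{rnlsdef}--\eqref{22} for the coefficients $q(\bt;\epsilon),r(\bt;\epsilon)$ — which holds exactly as in the $X$-only case because the recursion \eqref{abc1}--\eqref{23} leaves no free parameter — we conclude $\widetilde R=R(\bt;\xi;\epsilon)$, i.e.\ \eqref{RP}.

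The main obstacle will be this last step: the order-by-order bookkeeping in $\xi$ (and $\epsilon$) in the entries of $\widetilde R$, and, more subtly, the fact that $\widetilde R$ a priori lives over $\widetilde V$ whereas $R(\bt;\xi;\epsilon)$ lives over $V$ — this is harmless precisely because the recursion \eqref{abc1}--\eqref{23} determines its solution uniquely already over $\widetilde V((\epsilon))[[\bt_{\ge 1}]]$, forcing the a priori $\widetilde V$-valued $\widetilde R$ to coincide with the $V$-valued $R(\bt;\xi;\epsilon)$. An alternative route that localizes the computation at $\bt_{\ge 1}=0$ is to prove \eqref{RP} there first, using the time-independent spectral equations \eqref{spectraleqA}--\eqref{spectraleqB}, and then propagate in $\bt$, observing that conjugating \eqref{wave2} yields $\epsilon\nabla(\nu)(\widetilde R)=\dfrac{[R(\nu),\widetilde R]}{\nu-\xi}+\bigl[-\tfrac{1}{\nu}P,\ \widetilde R\bigr]$, the same linear evolution \eqref{derivition} satisfied by $R(\bt;\xi;\epsilon)$, so the two agree for all $\bt$ once they agree at $\bt_{\ge 1}=0$.
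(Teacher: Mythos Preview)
Your proposal is correct and follows essentially the same approach as the paper: define $M=\Psi P\Psi^{-1}$, verify $\Tr M=2$, $\det M=0$, $[\mathcal L(\xi),M]=0$, and the normalization $M-P\in\mathrm{Mat}(2,\widetilde V((\epsilon))[[\bt_{\ge1},\xi^{-1}]]\xi^{-1})$, then invoke uniqueness of the matrix resolvent. You supply more detail than the paper does (the explicit derivation of $[\mathcal L,\widetilde R]=0$ from the $k=0$ spectral equation, the entrywise leading-order check, and the remark on $\widetilde V$ versus $V$), but the logical skeleton is identical.
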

\begin{proof} Define 
\begin{align}
M\=M(\bt;\xi;\epsilon)\:= 
\Psi(\bt;\xi;\epsilon) \begin{pmatrix} 2 & 0 \\ 0 & 0 \end{pmatrix} 
\Psi^{-1}(\bt;\xi;\epsilon).
\end{align}	  
It is easy to verify that $M$ satisfies 
\begin{align}
	\bigl[\mathcal{L},M\bigr](\Psi)= 0, \quad \det \, M=0,\quad{\rm tr\,}M=2.
\end{align}
The entries of $M$ in terms of the pair of wave functions read
\begin{align}
M=\frac{2}{d(\bf t;\xi;\epsilon)}\begin{pmatrix}
	\frac{\psi_A(\psi_{B,X}-\xi \psi_B)}{q}&-\psi_A\psi_B  \\
	 \frac{(\psi_{A,X}-\xi \psi_A)(\psi_{B,X}-\xi \psi_B)}{q^{2}}&-\frac{\psi_B(\psi_{A,X}-\xi \psi_A)}{q}
\end{pmatrix}.
\end{align}
It follows from \eqref{waveta}, \eqref{wavetb} and Lemma~\ref{pair-wave} that
\begin{align}
	M(\bt;\xi;\epsilon)-\begin{pmatrix} 2 & 0 \\ 0 & 0 \end{pmatrix} 
	\in {\rm Mat}\left(2,\widetilde V((\epsilon))[[\bt_{\ge 1},\xi^{-1}]]\xi^{-1}\right) \,.
\end{align}
The proposition then follows from the uniqueness of the matrix resolvent for the AKNS hierarchy.
\end{proof}
We are now to prove Theorem~\ref{main1time}.
\begin{proof}[Proof of Theorem~\ref{main1time}]
	It follows from~\eqref{RP} and (\ref{defineD}) that 
	\begin{align}\label{RP1}
		&R(\bt;\xi;\epsilon) \=    
		\frac{2r_1(\bt;\xi;\epsilon)^T r_2(\bt;\xi;\epsilon)}{d(\bt;\xi;\epsilon)},\quad D(\bt;\xi,\nu;\epsilon)=\frac{r_2(\bt;\nu;\epsilon) \, r_1(\bt;\xi;\epsilon)^T}{\xi-\nu},
	\end{align}
	where 
	\begin{align}
		&r_1(\bt;\xi;\epsilon):=\left(\psi_A(\bt;\xi;\epsilon), \frac{\epsilon\psi_{A,X}(\bt;\xi;\epsilon)-\xi\psi_A(\bt;\xi;\epsilon)}{q(\bt;\epsilon)}\right), \\
		&r_2(\bt;\xi;\epsilon):=\left(\frac{\epsilon\psi_{B,X}(\bt;\xi;\epsilon)-\xi\psi_B(\bt;\xi;\epsilon)}{q(\bt;\epsilon)},-\psi_B(\bt;\xi;\epsilon)\right).
	\end{align}
Then, substituting this expression into the identity~\eqref{tastr} and~\eqref{tastr}, the left of this proof is similiar to that in~\cite{FLY24, Y20}, we omite its details here. The theorem is proved.
\end{proof}
We define 
\begin{align}
	B(\bt;\xi,\nu;\epsilon):=\frac{D(\bt;\xi,\nu;\epsilon)}{\nu}\frac{e^{e^{\epsilon^{-1}\sum_{k\ge 0}2^{k}t_{k}\nu^{k+1}}}}{e^{\epsilon^{-1}\sum_{k\ge 0}2^{k}t_{k}\xi^{k+1}}}
\end{align}
Then Theorem~\ref{main1time} can be alternatively written in terms of $B(\bt;\xi,\nu;\epsilon)$ as shown in the following proposition.
\begin{pro}\label{Beq}
	Fix $k\geq 2$ being an integer. 
The generating series of $k$-point correlation functions of the solution $(q(\bt;\epsilon),r(\bt;\epsilon))$ has the following expression:
\begin{align}
	&\sum_{i_1,\dots,i_k\ge 0}\Omega_{i_{1},i_{2},\dots,i_{k}}({\bf t};\epsilon)\prod_{j=1}^{k}\frac{1}{2^{i_j}\xi_{j}^{i_j+2}}
	=  -\sum_{\pi \in \mathcal{S}_k/C_k} \prod_{j=1}^k B(\bt;\xi_{\sigma(j+1)},\xi_{\sigma(j)};\epsilon) 
	\,-\, \frac{4\delta_{k,2}}{(\xi_1-\xi_2)^2}  \,.  \label{mainBidentity}
\end{align}
\end{pro}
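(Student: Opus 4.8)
The plan is to obtain Proposition~\ref{Beq} as an immediate consequence of Theorem~\ref{main1time}: the passage from~\eqref{mainidentity} to~\eqref{mainBidentity} is purely a repackaging of the scalar prefactors built into $B(\bt;\xi,\nu;\epsilon)$, with no new analytic input. Set $\alpha(\bt,\xi):=\epsilon^{-1}\sum_{k\ge 0}2^{k}t_{k}\xi^{k+1}$, so that the definition of $B$ reads
\begin{align*}
B(\bt;\xi,\nu;\epsilon)=\frac{1}{\nu}\,D(\bt;\xi,\nu;\epsilon)\,e^{\alpha(\bt,\nu)-\alpha(\bt,\xi)}.
\end{align*}
First I would substitute this into the cyclic product on the right-hand side of~\eqref{mainBidentity}. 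With the standard convention $\sigma(k+1)=\sigma(1)$, the factor indexed by $j$ is $B(\bt;\xi_{\sigma(j+1)},\xi_{\sigma(j)};\epsilon)$, i.e.\ its first slot is $\xi_{\sigma(j+1)}$ and its second slot is $\xi_{\sigma(j)}$; hence it equals $\xi_{\sigma(j)}^{-1}\,e^{\alpha(\bt,\xi_{\sigma(j)})-\alpha(\bt,\xi_{\sigma(j+1)})}\,D(\bt;\xi_{\sigma(j+1)},\xi_{\sigma(j)};\epsilon)$.

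The key point is that in forming $\prod_{j=1}^{k}B(\bt;\xi_{\sigma(j+1)},\xi_{\sigma(j)};\epsilon)$ the exponential prefactors telescope around the cycle: $\sum_{j=1}^{k}\bigl(\alpha(\bt,\xi_{\sigma(j)})-\alpha(\bt,\xi_{\sigma(j+1)})\bigr)=0$, because reindexing the second sum by $j\mapsto j+1$ (modulo $k$) turns it into the first sum. Likewise $\prod_{j=1}^{k}\xi_{\sigma(j)}^{-1}=\prod_{j=1}^{k}\xi_{j}^{-1}$ since $\sigma$ permutes $\{1,\dots,k\}$. Therefore, for every coset $\sigma\in S_k/C_k$ (and independently of the chosen cyclic representative),
\begin{align*}
\prod_{j=1}^{k}B(\bt;\xi_{\sigma(j+1)},\xi_{\sigma(j)};\epsilon)
=\frac{1}{\prod_{j=1}^{k}\xi_{j}}\prod_{j=1}^{k}D(\bt;\xi_{\sigma(j+1)},\xi_{\sigma(j)};\epsilon).
\end{align*}
Summing over $\sigma\in S_k/C_k$, multiplying by $-1$, and adding $-4\delta_{k,2}/(\xi_1-\xi_2)^2$, the right-hand side of~\eqref{mainBidentity} becomes verbatim the right-hand side of~\eqref{mainidentity}; Theorem~\ref{main1time} then yields~\eqref{mainBidentity}. (Incidentally, the same computation applied to~\eqref{defineD} shows that the exponential factors of $\psi_A,\psi_B$ inside $D$ are exactly cancelled by the prefactor in $B$, so that $B$ is a genuine power series in $\xi^{-1},\nu^{-1}$; this is what makes~\eqref{mainBidentity} directly comparable with Zhou's formula~\eqref{eqn:G-n}, and it is this comparison that will feed into the proof of Theorem~\ref{kptau}.)

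I do not expect any real obstacle here. The only point requiring care is the cyclic-index bookkeeping — keeping the convention $\sigma(k+1)=\sigma(1)$ consistent throughout, and checking that the rewritten product is genuinely well defined on cosets in $S_k/C_k$ so that the sum over $S_k/C_k$ makes sense. If preferred, this statement could equally well be recorded as a corollary of Theorem~\ref{main1time} rather than as a separate proposition.
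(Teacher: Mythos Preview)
Your argument is correct and is exactly the approach the paper has in mind: the paper introduces $B(\bt;\xi,\nu;\epsilon)$ right after Theorem~\ref{main1time} and presents Proposition~\ref{Beq} as nothing more than Theorem~\ref{main1time} ``alternatively written in terms of $B$'', with no separate proof given. Your telescoping of the exponential prefactors around the cycle and the reduction of $\prod_j \xi_{\sigma(j)}^{-1}$ to $\prod_j \xi_j^{-1}$ is precisely the (trivial) repackaging intended.
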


In terms of the function $\phi_A,\phi_B$, the pair condition~\eqref{paircond} reads
\begin{align}\label{paircondn}
\phi_A(\bt;\xi;\epsilon) \phi_{B,X}(\bt;\xi;\epsilon)-\phi_{A,X}(\bt;\xi;\epsilon) \phi_B(\bt;\xi;\epsilon)+\left(\frac{q_{X}}{q}-2\xi\right)\phi_A(\bt;\xi;\epsilon) \phi_B(\bt;\xi;\epsilon)=-2\xi,
\end{align}
and the function $B(\bt;\xi,\nu;\epsilon)$ reads
\begin{align}
&B(\bt;\xi,\nu;\epsilon)\nonumber\\
=&\frac{\phi_A(\bt;\xi;\epsilon) \phi_{B,X}(\bt;\nu;\epsilon)-\phi_{A,X}(\bt;\xi;\epsilon) \phi_B(\bt;\nu;\epsilon)+\left(\frac{q_{X}}{q}-2\nu\right)\phi_A(\bt;\xi;\epsilon) \phi_B(\bt;\nu;\epsilon)}{\nu(\xi-\nu)}.\label{newformB}
\end{align}
\begin{lemma}\label{lemmaB}
The function $B(\bt;\xi,\nu;\epsilon)$ admits the following expansion:
\begin{align}
B(\bt;\xi,\nu;\epsilon)=\frac{-2}{\xi-\nu}+\sum_{i,j\ge 0}\frac{A_{i,j}({\bf t};\epsilon)}{\xi^{i+1}\nu^{j+1}}.
\end{align}
\end{lemma}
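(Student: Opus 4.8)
The plan is to establish the claimed expansion of $B(\bt;\xi,\nu;\epsilon)$ directly from the explicit formula~\eqref{newformB}, using the structure of $\phi_A$ and $\phi_B$ together with the pair condition~\eqref{paircondn}. First I would note that since $\phi_A=1+O(\xi^{-1})$ and $\phi_B=1+O(\nu^{-1})$ with coefficients in $\widetilde V((\epsilon))$, the numerator of~\eqref{newformB} is of the form $-2\nu + O(\nu^0)$ as a series in $\nu^{-1}$ (the leading term coming from the $-2\nu\,\phi_A\phi_B$ summand), so that $B(\bt;\xi,\nu;\epsilon)=\frac{-2}{\xi-\nu}+(\text{lower order})$; the content of the lemma is that, after subtracting $\frac{-2}{\xi-\nu}$, what remains is a genuine double power series in $\xi^{-1},\nu^{-1}$ with no negative powers of $(\xi-\nu)$ and, crucially, has vanishing $\nu^0$- and $\xi^0$-rows, i.e. lies in $\xi^{-1}\nu^{-1}\widetilde V((\epsilon))[[\xi^{-1},\nu^{-1}]]$.

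The key step is to control the behavior of $B$ on the diagonal $\xi=\nu$ and to extract the leading terms in $\xi$ and $\nu$ separately. I would proceed as follows. Write $N(\bt;\xi,\nu;\epsilon)$ for the numerator in~\eqref{newformB}. Using~\eqref{paircondn}, which says precisely that $N(\bt;\xi,\xi;\epsilon)=-2\xi$, one sees that $N(\bt;\xi,\nu;\epsilon)+2\nu$ vanishes at $\nu=\xi$ up to the correction $N(\bt;\xi,\xi;\epsilon)+2\xi=0$ — more carefully, $N(\bt;\xi,\nu;\epsilon)-N(\bt;\xi,\xi;\epsilon)$ is divisible by $(\xi-\nu)$ in the appropriate ring of formal series, since it is built from $\phi_B(\bt;\nu;\epsilon)-\phi_B(\bt;\xi;\epsilon)$, $\phi_{B,X}(\bt;\nu;\epsilon)-\phi_{B,X}(\bt;\xi;\epsilon)$ and $\nu-\xi$, each of which is a multiple of $(\nu-\xi)$. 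Hence
\begin{align}
B(\bt;\xi,\nu;\epsilon)=\frac{N(\bt;\xi,\nu;\epsilon)+2\nu}{\nu(\xi-\nu)}-\frac{2}{\xi-\nu}=\frac{N(\bt;\xi,\nu;\epsilon)-N(\bt;\xi,\xi;\epsilon)}{\nu(\xi-\nu)}-\frac{2}{\xi-\nu},
\end{align}
and the first term is regular at $\xi=\nu$, hence expands as a power series in $\xi^{-1},\nu^{-1}$ with a factor $\nu^{-1}$ visible. To get the matching $\xi^{-1}$ factor I would use the large-$\xi$ asymptotics: since $\phi_A(\bt;\xi;\epsilon)=1+O(\xi^{-1})$ and $\phi_{A,X}(\bt;\xi;\epsilon)=O(\xi^{-1})$, the numerator $N$ has the form $-2\nu + O(\xi^{-1})+O(\nu^{-1}\cdot\text{stuff})$, and comparing with the $\nu$-leading behaviour already extracted shows the $\xi^0$ part of $B$ cancels. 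I would phrase this cleanly by writing $A_{i,j}(\bt;\epsilon)$ for the coefficients of $\xi^{-i-1}\nu^{-j-1}$ in $B(\bt;\xi,\nu;\epsilon)+\frac{2}{\xi-\nu}$ expanded in the region $|\nu|<|\xi|$, and then checking the $i=-1$ and $j=-1$ coefficients vanish using the two limiting computations above.

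The main obstacle I anticipate is bookkeeping the expansion conventions: $\frac{1}{\xi-\nu}$ must be expanded as $\sum_{m\ge 0}\nu^m\xi^{-m-1}$ (so in $\widetilde V((\epsilon))[[\xi^{-1}]][[\nu]]$), and one must verify that after the subtraction the result genuinely lies in $\widetilde V((\epsilon))[[\xi^{-1},\nu^{-1}]]$, i.e. the positive powers of $\nu$ coming from $\frac{1}{\xi-\nu}$ are exactly cancelled by positive powers of $\nu$ hidden in $\frac{N-N|_{\nu=\xi}}{\nu(\xi-\nu)}$. This is a consequence of the regularity at $\xi=\nu$ established above, but stating it precisely requires care with which completion one works in; I would handle it by noting that $\frac{N(\bt;\xi,\nu;\epsilon)-N(\bt;\xi,\xi;\epsilon)}{\xi-\nu}$ is, by Taylor expansion in the second slot, an honest element of $\widetilde V((\epsilon))[[\xi^{-1},\nu^{-1}]]$ (each difference quotient of a series in $\nu^{-1}$ is again such a series), so dividing by the additional $\nu$ lands in $\nu^{-1}\widetilde V((\epsilon))[[\xi^{-1},\nu^{-1}]]$, and then the vanishing of the $\xi^0$-row follows from the $\xi\to\infty$ limit. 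With these two vanishing checks in hand, relabeling the remaining coefficients as $A_{i,j}(\bt;\epsilon)$ with $i,j\ge 0$ completes the proof.
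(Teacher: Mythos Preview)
Your approach is essentially the same as the paper's: both use the pair condition~\eqref{paircondn} to kill the pole on the diagonal $\xi=\nu$ and then invoke the leading behaviour $\phi_A=1+O(\xi^{-1})$, $\phi_B=1+O(\nu^{-1})$ to control the $\xi^0$- and $\nu^0$-rows. The paper packages the first step as a second-order Taylor expansion of $\phi_A(\xi)$ around $\nu$ and substitutes into~\eqref{newformB}; you package it as a difference quotient of $N$ in the $\nu$-variable. These are the same computation in different clothing.

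One algebraic slip to fix: in your displayed identity the second equality asserts $N+2\nu=N-N|_{\nu=\xi}$, i.e.\ $N|_{\nu=\xi}=-2\nu$, but the pair condition gives $N(\bt;\xi,\xi;\epsilon)=-2\xi$. The correct decomposition is
\[
B+\frac{2}{\xi-\nu}\;=\;\frac{N(\bt;\xi,\nu;\epsilon)-N(\bt;\xi,\xi;\epsilon)}{\nu(\xi-\nu)}\;-\;\frac{2}{\nu},
\]
using $\frac{-2\xi}{\nu(\xi-\nu)}=-\frac{2}{\nu}-\frac{2}{\xi-\nu}$. This extra $-\tfrac{2}{\nu}$ is not a nuisance but exactly what makes your subsequent $\xi\to\infty$ check work: the difference quotient on the right contributes a leading term $\tfrac{2\phi_A(\bt;\xi;\epsilon)}{\nu}$ (coming from the $-2\nu\,\phi_A\phi_B$ piece of $N$), whose constant-in-$\xi$ part $\tfrac{2}{\nu}$ cancels the $-\tfrac{2}{\nu}$, leaving only $\tfrac{2(\phi_A-1)}{\nu}\in\xi^{-1}\nu^{-1}\widetilde V((\epsilon))[[\xi^{-1}]]$ plus terms already in $\xi^{-1}\nu^{-1}\widetilde V((\epsilon))[[\xi^{-1},\nu^{-1}]]$. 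With that correction your outline goes through and matches the paper.
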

\begin{proof}
	Recall the following identity:
	\begin{align}
		\phi_A(\xi)=	\phi_A(\nu)+\phi_{A,X}(\nu)(\xi-\nu)+(\xi-\nu)^{2}\partial_{\nu}\left(\frac{\phi_A(\xi)-\phi_A(\nu)}{\xi-\nu}\right),
	\end{align}
	where we omite the arguments $\bt,\epsilon$ from $\phi_A(\bt;\xi;\epsilon)$. Substituting this identity in~\eqref{newformB} and using~\eqref{paircondn}, ~\eqref{waveta},~\eqref{wavetb}, we find the validity of the expansion.
\end{proof}

We are now to prove Theorem~\ref{kptau}.
\begin{proof}[Proof of Theorem~\ref{kptau}]
Following Lemma~\ref{lemmaB}, Proposition~\ref{Beq} and $\left[\text{\cite{YZhou},\,Corollary\, 2.1}\right]$(see also~\cite{ABDKS25} Lemma 3.1 or~\cite{Z15} Theorem 5.3), the theorem is proved.
\end{proof}

\begin{remark}
Note that the tau-function for the AKNS hierarchy  was first introduced in~\cite{CZ94} by using the bilinear equations involving two auxiliary variables $\rho,\sigma$ defined by $	q=\rho/\tau^{CZ}, r=\sigma/\tau^{CZ}$.
Indeed, following the results in~\cite{BLX96, C92, FY22, FYZ23} (see also Lemma~\ref{exist-wave}), we can prove that
\begin{align}
	\epsilon^{2}\frac{\partial^{2}\log  \tau^{CZ}}{\partial t_{k}\partial t_{0}}={\rm res}_{\epsilon\partial}L^{k}=2^{k}A_{k}=\epsilon^{2}\frac{\partial^{2}\log  \tau}{\partial t_{k}\partial t_{0}}.
\end{align} 
%Here $L=\epsilon\partial_{X}+q(\epsilon\partial_{X})^{-1}r$ is the scale Lax operator of the AKNS hierarchy. 
This implies that the tau-function $\tau^{CZ}$ agrees with $\tau({\bf t};\epsilon)$. 
\end{remark}

\paragraph{Acknowledgements}
The author is grateful to Professor  Di Yang for his advice, and helpful discussions and suggestions on this note.
The work is partially supported by Scientific Research Foundation for High-level Talents of Anhui University of Science and Technology Z9N4240070.

\smallskip
        
\pdfbookmark[1]{References}{ref}

\medskip
\medskip
\medskip

\medskip

\noindent Ang Fu

\noindent angfu@aust.edu.cn

\medskip

\noindent School of Safety Science and Engineering, Anhui University of Science and Technology, Huainan, 232001, P. R.~China

\noindent School of Mathematical and Big Data, Anhui University of Science and Technology, Huainan, 232001, P.R.~China

\end{document}